\newtheorem{theorem}{Theorem}[section]
\newtheorem{definition}[theorem]{Definition}
\newcommand{\hlc}[2][yellow]{{%
    \colorlet{foo}{#1}%
    \sethlcolor{foo}\hl{#2}}%
}
\newcommand{\Fq}{\mathbb{F}_q}
\newcommand{\Z}{\mathbb{Z}}
\newcommand{\qrn}{{\mathbb{F}_q[x] \over \langle x^n-1\rangle}}
\newcommand{\qrm}{{\mathbb{F}_q[x] \over \langle x^m-a\rangle}}
\begin{document}
%
\title{New Linear Codes as Quasi-Twisted Codes from Long Constacyclic Codes}

\author{\IEEEauthorblockN{Nuh Aydin}
\IEEEauthorblockA{Department of Mathematics and Statistics\\
Kenyon College\\
Gambier, OH  43022\\
Email: aydinn@kenyon.edu}
\and
\IEEEauthorblockN{Thomas Guidotti}
\IEEEauthorblockA{Kenyon College\\
Gambier, OH  43022\\
Email: guidotti1@kenyon.edu}
\and
\IEEEauthorblockN{Peihan Liu}
\IEEEauthorblockA{Kenyon College\\
Gambier, OH  43022\\
Email: liu4@kenyon.edu}}


%


\maketitle

\begin{abstract}
One of the most important and challenging problems in coding theory is to determine the optimal values of the parameters of a linear code and to explicitly construct codes with optimal parameters, or as close to the optimal values as possible. The class of quasi-twisted (QT) codes has been very promising in this regard. Over the past few decades various search algorithms to construct QT codes with better parameters have been employed. Most of these algorithms (such as ASR \cite{qtmain}) start by joining constacyclic codes of smaller lengths to obtain QT codes of longer lengths. There has been an algorithm that works in the opposite way that constructs shorter QT codes from long constacyclic codes. We modified and generalized this algorithm and obtained new linear codes via its implementation. We also observe that the new algorithm is related to the ASR algorithm.       
\end{abstract}

\begin{IEEEkeywords}
quasi-twisted codes, best known linear codes, constacyclic codes, search algorithms for linear codes
\end{IEEEkeywords}

\ifCLASSOPTIONpeerreview
 \begin{center} \bfseries EDICS Category: 3-BBND \end{center}
\fi
%
\IEEEpeerreviewmaketitle

\section{Introduction}

A linear block code $C$ of length $n$ over the finite field $\Fq$ is a vector subspace of $\Fq^n$. The elements of $C$ are called codewords.  If the dimension of $C$ is $k$, then it is referred to as an $[n,k]_q$-code. If the minimum (Hamming) distance (weight) is $d$, then and it is an $[n,k,d]$-code. One of the most important and challenging problems in coding theory is to determine the optimal values of the parameters $n,k,d$, given the alphabet size $q$ and explicitly construct codes whose parameters attain the optimal values. The problem can be formulated in a few different ways. One common version is to fix $q,n$ and $k$ and look for the maximum possible value of $d$. In general, this optimization problem is very hard and in most cases codes with optimal parameters  are not known.
The online database \cite{database} contains data about what is known about this problem for codes over the alphabets $\Fq$ for $q\leq 9$ up to certain length for each alphabet. In most cases, there are gaps between the minimum distances of best know linear codes (BKLC) and the best theoretical upper bound on $d$.     

This optimization problem is hard for two main reasons. First, the number $\displaystyle{\frac{(q^n-1)(q^n-q)\cdots (q^n-q^{k-1})}{(q^k-1)(q^k-q)\cdots (q^k-q^{k-1})}}$  of linear codes  over $\Fq$ of length $n$ and dimension $k$  is large and grows fast. Hence exhaustive computer searches are not feasible for all but small values of $n$ or $k$. Secondly, computing the minimum distance (weight) of a linear code is computationally intractable \cite{NPHard}. For most entries in the database \cite{database}, optimal values of the parameters are not attained. Optimal codes are generally known when either $k$ or $n-k$ is small. 

Numerous approaches, techniques, and search methods have been employed to improve the parameters of BKLCs to get closer to the the optimal values. It is unlikely that a single method will work to solve most instances of this challenging problem. One method that has been quite effective to obtain new codes has been computer searches in the class of quasi-twisted (QT) codes. The algorithm ASR introduced in \cite{qtmain} is one such algorithm. It has been improved in recent years and  produced dozens of record breaking codes. The ASR algorithm searches for new linear codes using a special type of 1-generator QT codes that have generators in a particular form.  It starts with a short constacyclic code and uses it as a building block  to construct longer QT codes.  Another method was introduced in \cite{tdoriginal} that works in the opposite direction, that is by starting with a very long constacyclic code and obtains shorter QT codes from it. In a recent paper \cite{Chen2015}, new linear codes have been obtained via a modification of the original method. In this work, we modified and generalized this method and obtained new linear codes from its implementation. Finally, we notice a connection between the new method and the ASR algorithm.

\section{Preliminaries} 
Cyclic codes are one of the most important classes of codes in algebraic coding theory for both theoretical and practical purposes. They are extensively studied and generalized in many directions. Some of the well known generalizations of cyclic codes are constacyclic codes, quasi-cyclic (QC) codes, and quasi-twisted (QT) codes. 

\begin{definition}
Let $\Fq$ be the finite field with $q$ elements and let $a\in \Fq^{*}=\Fq\setminus \{0\}$. A linear code $C$ of length $n$ over $\Fq$ is called a quasi-twisted (QT) code of index $\ell$ (or an $\ell$-QT code) if it is closed under the constacyclic shift by $\ell$ positions, i.e, for any codeword  $\bold{c}=(c_0,c_1,\cdots,c_{n-1}) \in C$, we also have $\pi_{\ell,a}(\bold{c})=(a c_{n-\ell}, \cdots,$ $ac_{n-1}, c_0, c_1,\cdots,$  $c_{n-\ell-1}) \in C$. 
\end{definition}

The smallest such positive integer $\ell$ is called the index of $C$ and it must divide the code length $n$. Hence, $n=m\cdot \ell$ for some $m\in \Z^{+}$. The scalar $a\in \Fq^{*}$ is called the shift constant. The following are some of the most important special cases of QT codes:
\begin{itemize}
\item $a=1, \ell=1$  gives cyclic codes
\item $a=-1, \ell=1$  gives negacyclic codes
\item $a=1$  gives quasi-cyclic (QC) codes
\item $\ell=1$  gives constacyclic codes
\end{itemize}

One of the reasons why cyclic codes are so prominent in coding theory is they establish a key link between algebra and coding theory through the correspondence between vectors $\bold{v}=(v_0,v_1,\cdots,v_{n-1})$ and polynomials\\ $v(x)=v_0+v_1x+\cdots+v_{n-1}x^{n-1}.$ This map establishes a vector space isomorphism between $\Fq^n$ and $\Fq[x]_{<n}=\{p(x) \in \Fq[x]: \deg(p(x)) <n \}$, the set of all polynomials of degree $<n$ over $\Fq$.  It is well known that under this identification cyclic codes of length $n$ over $\Fq$ correspond to the ideals of the quotient ring $\qrn$. Under the same identification, the algebraic structure of a QT code of length $n=m\cdot \ell$ is an $R$-module of $R^{\ell}$, where $R=\qrm$. If $C$ is generated by $r$ elements of $R^{\ell}$ then it is called an $r$-generator QT code. A generator matrix of an $r$-generator QT code can be put, by applying a suitable permutation of the columns if necessary, into the form: 
$$
\begin{bmatrix} 
G_{11} & G_{12} & \cdots & G_{1l}  \\
G_{21} & G_{22} & \cdots & G_{2l} \\
\cdots & \cdots & \cdots & \cdots & \\
G_{r1} & G_{r2} & \cdots & G_{rl} \\
\end{bmatrix}
$$
where each $G_{ij}$ is an $a$-circulant (also called a twistulant) matrix of the form

$$
\begin{bmatrix} 
c_0&c_1& c_2 & \cdots    &c_{m-1}\\
ac_{m-1}&c_0&c_1&\cdots &c_{m-2}\\
ac_{m-2}&ac_{m-1}&c_0&\cdots &c_{m-3}\\
\hdots & \hdots  & \hdots & \hdots & \hdots  \\
\end{bmatrix}
$$

\noindent where each row is a constacyclic shift of the previous row.



The ASR search algorithm that is introduced in \cite{qtmain} is based on the following theorem.
\begin{theorem}\label{theorem:ASR}
\cite{qtmain} Let C be a 1-generator QT code of length $n = m\ell$ over $\Fq$
with a generator of the form
$$(g(x)f_1(x), g(x)f_2(x), . . . , g(x)f_{\ell}(x))$$ where $x^m - a = g(x)h(x)$ and $\gcd(h(x),f_i(x)) = 1$ for all $i = 1,...,\ell$. Then $dim(C) = m -deg(g(x))$, and $d(C) \geq \ell\cdot d$ where $d$ is the minimum distance of the constacyclic code $C_g$ generated by $g(x)$.
\end{theorem}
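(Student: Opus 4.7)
The plan is to realize $C$ as the image of a single $R$-module homomorphism and then leverage the coprimality hypothesis twice: once for the dimension count via rank--nullity, and once for the weight bound via a ``nonzero component forces all components nonzero'' argument.

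First I would set $R = \mathbb{F}_q[x]/\langle x^m - a\rangle$ and define the $R$-linear map
\[
\phi: R \longrightarrow R^{\ell}, \qquad u(x) \longmapsto \bigl(u(x)g(x)f_1(x),\,\ldots,\,u(x)g(x)f_{\ell}(x)\bigr).
\]
Since $C$ is 1-generated by $(gf_1,\ldots,gf_\ell)$, its image is exactly $C$. To compute $\ker\phi$, observe that $u g f_i \equiv 0 \pmod{x^m - a}$ means $gh \mid u g f_i$ in $\mathbb{F}_q[x]$, i.e. $h \mid u f_i$. The coprimality $\gcd(h,f_i)=1$ then forces $h \mid u$ (and one such $i$ already suffices). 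Conversely, if $h \mid u$ then each $u g f_i$ is divisible by $gh = x^m - a$ and hence vanishes in $R$. Thus $\ker\phi$ is the principal ideal generated by $h(x)$ in $R$, which is the constacyclic code with check polynomial $g$ and has dimension $m - \deg g$. By rank--nullity applied to $\phi$, $\dim C = m - \dim\ker\phi = m - \deg g$.

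For the distance bound, I would take an arbitrary nonzero codeword $c = (ugf_1,\ldots,ugf_\ell) \in C$ and show that \emph{every} component is nonzero in $R$. Indeed, if some component $ugf_i$ vanishes in $R$, then by the kernel computation above we have $h \mid u$, which forces $ugf_j = 0$ for \emph{all} $j$, contradicting $c \neq 0$. So once any single coordinate is nonzero, all $\ell$ coordinates are nonzero elements of the constacyclic code $C_g = \langle g(x)\rangle \subseteq R$ (each $ugf_i$ is a multiple of $g$). Each such component therefore contributes Hamming weight at least $d = d(C_g)$, and summing over the $\ell$ blocks yields $\mathrm{wt}(c) \geq \ell d$.

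The only mildly subtle step is the kernel identification $\ker\phi = \langle h(x)\rangle$: one must be careful that the coprimality $\gcd(h,f_i)=1$ is used to pass from $h \mid u f_i$ to $h \mid u$, and that conversely elements of $\langle h\rangle$ really do lie in the kernel (which is clear since $gh = x^m - a$). Once this is settled, both the dimension formula and the distance bound follow directly, and no further properties of $a$-circulant blocks or of the specific generator are needed beyond the hypothesis.
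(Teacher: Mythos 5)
The paper does not actually prove this theorem itself --- it quotes it from \cite{qtmain} --- and your argument is essentially the standard one behind that result: realize $C$ as the image of the $R$-linear evaluation map $u \mapsto (ugf_1,\ldots,ugf_\ell)$, identify the kernel using the coprimality hypothesis, and observe that every nonzero codeword has all $\ell$ components nonzero in $C_g$. Both the overall structure and the two separate uses of $\gcd(h,f_i)=1$ are correct.

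One bookkeeping slip needs fixing in the dimension count. You correctly identify $\ker\phi$ as the ideal $\langle h(x)\rangle$ of $R$, i.e.\ the constacyclic code generated by $h$; but its dimension is $m - \deg h = \deg g$, not $m - \deg g$ as you wrote (the code of dimension $m-\deg g$ is $C_g=\langle g\rangle$, which has \emph{check} polynomial $h$). As written, your rank--nullity line is internally inconsistent: if $\dim\ker\phi$ were $m-\deg g$, then $m-\dim\ker\phi$ would equal $\deg g$. With the corrected count $\dim\ker\phi=\deg g$, rank--nullity gives $\dim C = m - \deg g$ as required, so the conclusion stands after a one-line repair. The distance argument is correct as stated: if any single coordinate $ugf_i$ vanished in $R$, then $h \mid uf_i$ and hence $h \mid u$, which would kill the entire codeword; therefore every coordinate of a nonzero codeword is a nonzero element of $C_g$ and contributes at least $d$ to the weight, giving $d(C)\geq \ell\cdot d$.
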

This algorithm has been refined and automatized  in more recent works such as (\cite{gf2}, \cite{gf7}, \cite{gf3}, \cite{nonprime}, \cite{involve}) and dozens of record breaking codes have been obtained over every finite field $\Fq$, for $q=2,3,4,5,7,8,9$ through its implementation. Moreover, it has been further generalized in \cite{genASR} and more new codes were discovered  that would have been missed by its earlier versions.

In an implementation of the ASR algorithm, we begin by choosing the alphabet  $\Fq$ and the shift constant $a$. We then pick a desired block length $m$ and number of blocks $\ell$ such that the resulting length of the code is $n = m\cdot \ell$.  Next, we find all divisors of the polynomial $x^m-a$ and end up with a list of possible generator polynomials $g(x)$ (with corresponding check polynomial $h(x)$ such that $g(x)\cdot h(x) = x^m - a$) of varying degrees, each generator giving a constacyclic code $C_g$ of length $m$ with dimension $k = m -\deg(g(x))$. Then the search program generates polynomials $f_{i}(x)$ where $\gcd(f_{i}(x), h(x)) = 1$ for all $i$ $1 \leq i \leq \ell$, and for each set $\{f_1(x),\dots,f_{\ell}(x) \}$ of polynomials constructs the QT code $C$ with a generator of the form given in Theorem~\ref{theorem:ASR}. Note that, each block is a constacyclic code generated by $g(x)f_i(x)$. In fact, $\langle g(x)f_i(x)\rangle = \langle g(x)f_j\rangle$ for all $i,j$ under the condition   $\gcd(h(x),f_i(x)) = 1$.
We simply join all of these $\ell$ blocks together to obtain a QT code whose minimum distance is guaranteed to be at least $\ell\cdot d$. In many cases, its actual minimum distance is much bigger.


There is another method of constructing QC and QT codes that works in the opposite way (which we  may informally refer to as "top-down method"). The basic idea of this method goes back to \cite{td91} and \cite{tdoriginal}. It starts with a long cyclic code of composite length  $N$ and permutes the coordinates of the code so that each block is a cyclic code of length dividing $N$. This method was later refined to search algorithms that produced QT 2-weight codes from constacyclic codes in \cite{Chen2009}

One improvement of the algorithm is given in \cite{Chen2015} which applied the method to the specific case of simplex codes, the duals of Hamming codes. Simplex codes are constacyclic themselves, so the idea of applying a permutation to a long constacyclic code to form a matrix that is a generator for a QT code can be applied here. This version of the algorithm uses the idea of a weight matrix to construct better QT codes. After applying the permutation to the generator of the constacyclic code, a $p \times p$ weight matrix is created to store the weights of the defining polynomials $g_{1}(x), \dots, g_{p}(x)$. The first row of this matrix has the weights $wt(g_{1}(x)), \dots,wt(g_{p}(x))$ and all subsequent rows are its cyclic shifts, so it suffices to store the first row of the matrix only. Because of this construction, it follows that the minimum distance of a QT $[n=t\cdot m, k]_{q}$ code is determined by the minimum row sums of the chosen columns from the weight matrix. So to form a QT code with these parameters having the highest possible minimum distance it suffices to maximize the minimum row sums. This gives rise to the iterative algorithm found in \cite{Chen2015}. The idea is that one will build a QT code from maximizing the minimum row sums one column at a time. So given a QT $[i\cdot m, k]_{q}$ code the algorithm tries to construct a QT $[(i+1)\cdot m \cdot, k]_{q}$ code by using the weight matrix to add another block to the resulting QT code. If the maximum number of blocks desired is $t$,  then the algorithm constructs codes with parameters $[m, k]_{q}, [2\cdot m, k]_{q} \cdots [t\cdot m, k]_{q}$.


\vspace{2mm}

\section{Our Contribution}

In this work, we generalize and modify the search method described in the previous section. As a result of our implementation of the generalized algorithm, we have found 5 new linear codes that improve the bounds in the database \cite{database}. 

The previous version of the  algorithm is restricted to the class of simplex codes. Our first generalization is to  apply it to the broader class of all constacyclic codes. We first choose the alphabet $\Fq$, a shift constant $a\in \Fq^{*}$, and a length $N$ with many factors. Next, we  find a number of generator polynomials $G(x)$ such that $G(x)|(x^N - a)$.  We want the degree of this generator to be relatively high, so our initial  program (written in Magma) that outputs the generator polynomials has constraints on their degrees. After reading in all of these long generator polynomials to a C++ program, we want to find all combinations of $m \cdot p = N$. This is obviously determined by the prime factorization of $N$. We  refer to $m$ as the block length and then $p$ is  the total number of blocks. For a given combination of $(m, p)$ we  first represent our long polynomial $G(x)$ as a vector in the usual way. Then we  split up this vector into $p$ vectors of length $m$ each, such that each vector combines columns $i,i + p, i + 2p, \dots, i + (m-1)p$ for $0 \leq i \leq p-1$. Next, we  convert all of these vectors back to polynomials $g_{i}(x)$ in the usual way and then compute $\gcd(x^m-a, g_{1}(x), g_{2}(x), \dots g_{p}(x))$. Computing this gcd is very similar to the ASR search in which we begin with a standard generator polynomial $g(x)$ (a divisor of $x^m - a$) and then multiplying it by polynomials $f_{i}(x)$ in each block such that $\gcd(h(x), f_{i}(x)) = 1$. The process here is simply the reverse: we are starting with $g(x) \cdot f_{i}(x)$ in each block and we want to determine the standard generator $g(x)$ based on this. Obviously then the dimension of the constacyclic code with standard generator $g(x) =\gcd(x^m-a, g_{1}(x), g_{2}(x), \dots g_{p}(x))$ is  $k = m -\deg(g(x))$. If $k$ is equal to the dimension of the original constacyclic code, then we are dealing with the $1$-generator QT case, otherwise we are dealing with a multi-generator case. Our last step here is to construct a $a$-circulant (twistulant) matrix corresponding to each of these defining polynomials $g_{i}(x)$ for $1 \leq i \leq p$, keeping only those matrices whose rank is equal to $k$ or $k-1$. We observed that using blocks that have the same rank usually gives the best results in terms of obtaining codes with high minimum distances.

Obviously the dimension $k$ has an upper bound of $m$, so it only makes sense to consider combinations of $(m, p)$ that will yield reasonable dimensions. In many cases where there are a high number of defining polynomials, their gcds with $x^m-a$ have a very low degree, resulting in a QT code whose dimension is very close to block length $m$.

For each combination of $(m, p)$ we apply the column permutations as described to put the large circulant matrix into $p$ blocks of length $m$ circulant matrices. We then horizontally join $t$ of these matrices to form the generator matrix for a QT code. An improvement we made to this algorithm deals with the rank of these circulant matrices. In \cite{tdoriginal}, the author notes that in cases where $N - \deg(G(x) \neq m - \deg(g(x))$ ($ G(x)$ is the generator polynomial of the long constacyclic code and $g(x)$ is the gcd of defining polynomials) we are dealing with a multi-generator QT code. Chen suggests that in such a case the smaller dimension generator polynomial can be augmented by adding rows one at a time such that the rows are linearly independent, in this way a code with dimension $ N-\deg(G(x))$ may be constructed. After some testing using this idea we had no success in  finding codes with parameters close to those of  BKLCs, so instead we decided to construct codes with dimension $k=m-\deg(g(x))$ simply taking the blocks of circulant matrices of rank $k$ rather than by  adding additional linearly independent rows.  Additionally we noticed that  amongst all of the  $p$ circulant matrices, there are many matrices of rank $k-1$ so we keep those as well. So once we choose $k$ after finding the defining polynomials, we  first determine if $k$ is  large enough to find reasonable results. After that we go through all circulant matrices and keep those whose rank is equal to $k$ or $k-1$. So when we end up selecting $t$ blocks of circulant matrices for a given $(m, p)$ combination, we are selecting $t$ blocks from the set of rank $k$ matrices and an additional $t$ blocks from the rank $k-1$ matrices. After horizontally joining these matrices we  construct two QT codes with lengths and dimensions $[tm, k]$ and $[tm, k-1]$ and check their minimum distances against the BKLC for those parameters.

In the next table, we give a few data points about the rank distribution for $N = 924$ over $GF(5)$. We give the values of $m \cdot p$ on the left and then the following columns are the rank and the number of matrices of that rank.

\begin{table}
\caption{the table below just shows a small part of all ranks distribution}\label{my-label}
\resizebox{0.85\textwidth}{!}{\begin{minipage}{\textwidth}
\begin{tabular}{ccccc}
\hline\noalign{\smallskip}
$N = m \cdot p$ &Rank1: No.&Rank2: No.&Rank3: No.&Rank4: No.\\
\noalign{\smallskip}\hline\noalign{\smallskip}
$924 = 4 \cdot 231$ & 3 : 49 & 2 : 140 & 1 : 21 & n/a \\
$924 = 6 \cdot 154$ & 4 : 126 & 3 : 14 & 2 : 14 & n/a \\
$924 = 21 \cdot 44$ & 7 : 32 & 6 : 12 & n/a & n/a \\
$924 = 22 \cdot 42$ & 11 : 28 & 10 : 14 & n/a & n/a \\
$924 = 28 \cdot 33$ & 14 : 20 & 13 :9 & 13 : 4 & n/a \\
$924 = 12 \cdot 77$ & 5 : 50 & 4 : 18 & 3 : 6 & 2 : 3 \\
$924 = 4 \cdot 231$ & 2 : 151 & 1 : 59 & n/a & n/a \\
$924 = 12 \cdot 77$ & 4 : 70 & 2 : 6 & n/a & n/a \\
$924 = 11 \cdot 84$ & 11 : 56 & 10 : 28 & n/a & n/a \\
$924 = 14 \cdot 66$ & 13 : 48 & 12 : 18 & n/a & n/a \\
\noalign{\smallskip}\hline
\end{tabular}
\end{minipage}}
\end{table}


If we choose the highest rank matrices, there is an interesting property of the $\gcd$ of all possible generator polynomials with $x^m-a$ and the $\gcd$ of all generator polynomials that corresponds to the highest rank circulant matrices with $x^m-a$. Given the context above, we have 
\begin{theorem} Let $B=\{g_1(x),\dots,g_p(x)\}$ be a set of generators of constacyclic codes, and let  $A=\{f_1(x),\dots, f_t(x)\}$ be the subset of $B$ consisting of those polynomials that correspond to circulant matrices of highest rank.  Then we have
$\gcd(A,x^m-a)=\gcd(B,x^m-a)$

\end{theorem}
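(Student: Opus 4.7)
My plan is to show that every $g_i \in A$ satisfies $\gcd(g_i, x^m-a) = g$, where $g := \gcd(B, x^m-a)$ is the right-hand side; once this is in place, the conclusion is immediate. Write $h := (x^m-a)/g$, so $g \mid g_i$ for every $i$. The standard identity for the rank of an $a$-circulant matrix gives that the rank of the matrix associated to $g_i$ equals $m - \deg\gcd(g_i, x^m-a)$; since $g \mid \gcd(g_i, x^m-a)$, every such rank is at most $m - \deg g$, with equality iff $\gcd(g_i, x^m-a) = g$ exactly. Thus the highest attainable rank is $m - \deg g$, and once we know this value is actually attained, $A$ coincides with $\{g_i : \gcd(g_i, x^m-a) = g\}$.

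For each such $g_i$ one has $\gcd(g_i, x^m-a) = g$, so taking the gcd over $A$ together with $x^m-a$ yields $\gcd(A, x^m-a) = g = \gcd(B, x^m-a)$, which is the claim. The whole argument therefore hinges on the nonvacuity of $A$, i.e.\ on the existence of some $g_i$ whose circulant matrix attains rank $m - \deg g$.

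This is the main obstacle. Writing $g_i = g \cdot q_i$ with $\gcd(q_1, \ldots, q_p, h) = 1$, the nonvacuity of $A$ amounts to the existence of some $i$ with $\gcd(q_i, h) = 1$; for abstract tuples one can arrange that every $q_i$ shares a factor with $h$ while no factor is common to all of them, so this does not follow from the coprimality condition alone. The extra input must come from the algorithm's setup: the $g_i$ arise as the columnar splits $g_i(x) = \sum_j G_{i+jp}\, x^j$ of a long generator $G$ with $G \mid x^N - a$ and $N = mp$. Via the factorization $G(x) = g(x^p)\tilde G(x)$ with $\tilde G(x) = \sum_{i=0}^{p-1} x^i q_i(x^p)$, the divisibility $G \mid x^N - a$ is equivalent to $\tilde G \mid h(x^p)$; from this I would try to deduce --- e.g.\ by a Vandermonde/DFT-type uncertainty argument applied to $(q_0(\alpha), \ldots, q_{p-1}(\alpha))$ for each $\alpha \in Z(h)$, or by a direct count of the roots of $\tilde G$ on the $p$-fold preimage of each such $\alpha$ --- that at least one $q_i$ must be coprime to $h$.
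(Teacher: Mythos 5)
Your reduction is the same one the paper uses: the rank of the $a$-circulant (twistulant) matrix of $g_i$ is $m-\deg\gcd(g_i,x^m-a)$, so \emph{if} the maximal possible rank $m-\deg g$ (with $g=\gcd(B,x^m-a)$) is actually attained, then $A$ is exactly $\{g_i:\gcd(g_i,x^m-a)=g\}$ and the conclusion is immediate. The paper's proof is precisely this second half: it writes each element of $A$ as $g(x)v_i(x)$ with $\gcd(v_i,x^m-a)=1$ and concludes. What you have correctly isolated --- and what the paper does not address --- is the attainment step. The paper silently replaces ``highest rank'' by ``rank $k=m-\deg g$'' when it (re)defines $A$ inside the proof, which begs the question. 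As you observe, for an arbitrary set $B$ of generators the statement is in fact false: take $B=\{s_1,s_2\}$ with $s_1,s_2$ coprime nontrivial divisors of $x^m-a$ of different degrees; then $\gcd(B,x^m-a)=1$, but the unique highest-rank block is the $s_j$ of smaller degree, so $\gcd(A,x^m-a)=s_j\neq 1$. Your diagnosis of where the real content lies is therefore sound, and sharper than the published argument.

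That said, your proposal does not close the gap either. The final paragraph is a plan (``I would try to deduce\dots''), not a proof: it is not established that the divisibility $\tilde G(x)\mid h(x^p)$ coming from $G\mid x^N-a$ forces some $q_i$ to be coprime to $h$, and that is exactly the missing implication. Until it is proved (or the theorem is restated with the explicit hypothesis that some block attains rank $m-\deg g$ --- which is what the implementation enforces in practice by keeping only rank-$k$ blocks), the argument is incomplete. If you can carry out the Vandermonde/DFT-type argument you sketch, you would obtain a corrected and strictly stronger version of the theorem than the one whose proof appears in the paper; as written, both your proof and the paper's stop short of that.
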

\begin{proof}

Let $B = \{g_{1}(x), \cdots g_{p}(x)\}$ be the set of $p$ defining polynomials as in the algorithm. Let $g(x) = \gcd \{B, x^m-a\}$ and let $k = m - \deg(g(x))$. Thus for every $ g_{i}(x) \in B$, we may rewrite $g_{i}(x) = g(x) \cdot u_{i}(x)$ for some $u_{i}(x) \in \Fq[x]$. Let $A = \{f_{1}(x), \dots, f_{t}(x)\}$ be the subset of $B$ such that the constacyclic code generated by each $f_{i}(x)$ has dimension $k$. Thus for every $f_{i}(x) \in A$, we may rewrite $f_{i}(x) = g(x) \cdot v_{i}(x)$ for some $v_{i}(x) \in \Fq[x]$ such that $\gcd(v_{i}(x), x^m-a) = 1$, for if this were not true then the constacyclic code generated by $f_{i}(x)$ would not have dimension $k$. Thus $\gcd(A, x^m-a) = g(x) = \gcd(B, x^m-a)$

\end{proof}

Thus, the codes obtained by our algorithm are of the form $(g_1(x),\dots,g_t(x))$ where $g_1(x),\dots,g_t(x)$ correspond to the highest rank matrices. We can find a lower bound on the minimum distance, and an equivalence between the ASR algorithm and this algorithm.
\begin{theorem} \cite{qtmain}
Let $\gcd(g_1(x),\dots g_t(x),x^m-a)$=D(x) and $f_i(x)=\frac{g_i(x)}{D(x)}$ for $i=1,2, \dots t$.
Then the 1-generator QT code $C$ generated by $(g_1(x),\dots ,g_t(x))$ is of length $m\cdot t$, dimension $m-\deg(D(x))$, and $d(C)\geq t\cdot d$, where $d$ is the minimum distance of the constacyclic code generated by $D(x)$.
\end{theorem}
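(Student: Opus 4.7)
The plan is to reduce the statement to a direct application of Theorem~\ref{theorem:ASR}, since the two conclusions --- dimension $m-\deg(D(x))$ and distance bound $t\cdot d$ --- are exactly what ASR delivers. The entire task therefore comes down to verifying ASR's two hypotheses in our notation: that $x^m-a = D(x)h(x)$ for some polynomial $h(x)$, and that $\gcd(h(x),f_i(x))=1$ for each $i=1,\dots,t$.

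The first hypothesis is immediate. Since $D(x)$ is defined as the gcd of a list that contains $x^m-a$, it divides $x^m-a$, so I can simply set $h(x)=(x^m-a)/D(x)$.

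For the coprimality hypothesis, I would appeal to the previous theorem together with the way the $g_i(x)$ arise in our algorithm: they are precisely the defining polynomials corresponding to the circulant matrices of maximum rank $k=m-\deg(D(x))$. The previous theorem then guarantees that each such $g_i(x)$ factors as $g_i(x)=D(x)v_i(x)$ with $\gcd(v_i(x),x^m-a)=1$, so $f_i(x)=v_i(x)$ is coprime to $x^m-a$. Because $h(x)$ divides $x^m-a$, this at once gives $\gcd(h(x),f_i(x))=1$ for every $i$, which is the coprimality condition that ASR requires.

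With both hypotheses in place, Theorem~\ref{theorem:ASR} applied with $\ell=t$, $g(x)=D(x)$, and the $f_i(x)$ above yields $\dim(C)=m-\deg(D(x))$ and $d(C)\geq t\cdot d$, while the length $m\cdot t$ is clear by construction. The one point that requires genuine care --- and therefore the main potential obstacle --- is recognizing that the bare gcd condition $\gcd(g_1,\dots,g_t,x^m-a)=D$ is in general strictly weaker than the per-index condition $\gcd(h,f_i)=1$ for each $i$; it is the previous theorem, specialized to the maximum-rank setting, that upgrades the former to the latter and makes ASR applicable in the present context.
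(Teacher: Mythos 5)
Your proposal is correct and follows essentially the same route as the paper, which treats this statement as a restatement of Theorem~\ref{theorem:ASR} under the correspondence $D(x)\leftrightarrow g(x)$, $f_i(x)\leftrightarrow q_i(x)$, and $h(x)=(x^m-a)/D(x)$. The one point you make explicit that the paper leaves implicit is exactly the right one to worry about: the collective condition $\gcd(g_1,\dots,g_t,x^m-a)=D(x)$ alone would give the length and dimension claims but not $d(C)\geq t\cdot d$ (a block could vanish if some $f_i$ shared a factor with $h$), so the per-index coprimality $\gcd(h(x),f_i(x))=1$ really does need to be supplied from the maximum-rank context of the preceding theorem, as you do.
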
 
In the ASR algorithm, we start with a generator  $g(x)$, which corresponds to the $D(x)$ above. Then by finding $q_i(x)$ that is co-prime with $h(x)=\frac{x^m-a}{g(x)}$, which corresponds to the $f_i(x)$ above, we form  QT codes with generators of the form $(g(x)q_1(x),g(x)q_2(x),\dots,g(x)q_t(x))$. Hence, a code constructed by the top down method of the form $(D(x)f_1(x),D(x)f_2(x),\dots, D(x)f_t(x))$ is essentially the same as a code constructed by the ASR algorithm.


So for each  possible value $t$ for the number of blocks, we want to construct  QT codes of length $t\cdot m$  and dimensions $k$ or $k-1$. These $t$ blocks come from the $p$ circulant matrices we have already constructed, and we  only join circulant matrices of the same rank. Clearly the total number of ways we can do this is $\binom{p}{t}$, which in general is very large so we have imposed a limit of $20,000$. The next step is to randomly select $t$ circulant matrices of rank $k$, and $t$ of rank $k-1$, horizontally join them, and construct two codes with parameters $[t\cdot m, k]_{q}$ and $[t\cdot m, k-1]_{q}$. Finally we compute the  minimum distance of each code and  compare it against the BKLC in \cite{database}.


Take our record breaking code $[84,19,41]_{5}$ as an example. We chose $N=840$, $q=5$ and $a=1$. Firstly, we need to find out the list of possible generator polynomials $g(x)$, which are divisors of $x^{840}-1$. Using an original generator polynomial of degree $765$ (so the dimension of the original constacyclic code is $75$), we found all possible $m$'s and $p$'s. The values of $m = 21$ and $p = 40$ yielded a record breaker with $t = 4$ blocks. In this case, of the 40 blocks in total,  we chose $4$ blocks from the $35$ blocks with the highest rank, which is $19$. Each of the remaining $5$ blocks has rank  $18$. The  dimension is determined  by computing $k = m - \deg(\gcd(x^m-a, g_{1}(x), g_{2}(x), \dots, g_{p}(x))$). After we  split up the original long generator polynomial using our selected values of $m$ and $p$, we construct a circulant matrix for each of the $p = 40$ polynomials. Once we have computed $k$, we go through all of these circulant matrices and only keep those with rank $k$ or $k-1$. After selecting our $t = 4$ full rank matrices, we horizontal join them, and get a final matrix of  $84=t\cdot m=4\cdot 21$ columns with rank $19$. For a particular choice of these 4 blocks, we obtained a new linear code whose minimum distance is 41, improving the minimum distance of the previously BKLC of this length and dimension given in \cite{database}. The  defining (generating) polynomials of each block of this code are given in the table below.


\section{New Codes}

We found two types of new codes from an implementation of the algorithm. We have found 5 QT codes that are new among all linear codes according to the database \cite{database}.  They are listed in the table below.

\begin{table}[h]
\resizebox{0.68\textwidth}{!}{\begin{minipage}{\textwidth}
\caption{Record breaking QT codes}

\label{tab-1}
\begin{tabular}{p{.5cm}p{2cm}p{1cm}p{1.5cm}p{1cm}p{5cm}}
\hline\noalign{\smallskip}
&$[n,k,d]_{q}$  & $\alpha$ & $N$ & $m$ & Polynomials  \\
\noalign{\smallskip}\hline\noalign{\smallskip}
    1 & $[85, 16, 45]_{5}$ &1 & 2142 & 17 &$g_{1}$= [1331311000103332] \\
  & & & & & $g_{2}$= [2030141001241411]\\ 
  & & & & & $g_{3}$= [3412143022013031]\\
  & & & & & $g_{4}$= [1123200013130012]\\
  & & & & & $g_{5}$= [4233002104312041]\\
  [.5ex] 
      2 & $[84, 19, 41]_{5}$ &1 & 840 & 21 &$g_{1}$= [44010311002304111222] \\
  & & & & & $g_{2}$= [141032024233443231]\\ 
  & & & & & $g_{3}$= [1310443344442044020]\\
  & & & & &$g_{4}$= [4011113222243123010]\\
  [.5ex] 
      3 & $[84, 13, 48]_{5}$ &1 & 3276 & 14 &$g_{1}$= [34030034422424] \\
  & & & & & $g_{2}$= [4023414111414]\\ 
  & & & & & $g_{3}$= [10342023404034]\\
  & & & & & $g_{4}$= [44300024241344]\\
  & & & & & $g_{5}$= [3221030030213]\\
  & & & & & $g_{6}$= [433111043203]\\
  [.5ex] 
  
      4 & $[65, 12, 39]_{7}$ &1 &35100&13 &$g_{1}$= [2322660251501] \\
  & & &&& $g_{2}$= [4415215004556]\\ 
  &&& & & $g_{3}$= [1551620013551]\\
  & &&& & $g_{4}$= [4030032120616]\\
  & & &&& $g_{5}$= [4626364150311]\\
  [.5ex] 
  5 & $[78, 13, 47]_{7}$ &1 &4680&13&$g_{1}$= [6536106450546] \\
  & & &&& $g_{2}$= [640410515651]\\ 
  &&& & & $g_{3}$= [32251003506]\\
  & &&& & $g_{4}$= [524220205542]\\
  & & &&& $g_{5}$= [520330333466]\\
  &&&&& $g_{6}$= [15211116242]\\
  [.5ex] 
\noalign{\smallskip}\hline
\end{tabular}
\end{minipage}}

\end{table}

The second type of  codes we have found are new in the class of QC and QT codes as presented in the online database \cite{qttables}. Often times, these codes have the parameters of the BKLCs given in \cite{database} and at the same time have more simple construction than the BKLCs in the database. For example, consider the code with parameters $[66,11,40]_7$ in the table below (number 4). It has the same parameters as the BKLC given in \cite{database} but the construction of the code in the database is indirect and complicated, involving many steps and manipulations of other codes. The code that we have is QC, so it has a more useful and desirable construction. We have found a large number of such codes. The table below contains 40 of them. In this table,
$\star$ delineates a new QC code for the online database of QT codes \cite{qttables},
$\star\star$ delineates a new QT code for the online database of QC codes \cite{qttables}, and
$\circ$ delineates a code with better construction than what is given in \cite{database}

\begin{table}

\resizebox{0.65\textwidth}{!}{\begin{minipage}{\textwidth}
\begin{tabular}{cccccc}
\hline\noalign{\smallskip}
&$[n,k,d]_{q}$  & $\alpha$ & $N$ & $m$ & Polynomials  \\
\noalign{\smallskip}\hline\noalign{\smallskip}
    1 & $[75,16,41]_{7}$ * &1&29700&25 &$g_{1}$= [263433044421333266145152] \\
  & & &&& $g_{2}$= [315432322306666040522014]\\ 
  & & &&& $g_{3}$= [423226456521644016532614]\\
  [.5ex] 
      2 & $[70,10,45]_{7}$ * &1&1530&10 &$g_{1}$= [6303410364] \\
  & & &&& $g_{2}$= [2050420624]\\ 
  & & &&& $g_{3}$= [2210321115]\\
  & & &&& $g_{4}$= [45041324]\\
  & & &&& $g_{5}$= [435123455]\\
  & & &&& $g_{6}$= [231612611]\\
  & & &&& $g_{7}$= [651163446]\\
  [.5ex] 
      3 & $[50,12,28]_{7}$ * &1&1650&25 &$g_{1}$= [540452551353554052546141] \\
  & & &&& $g_{2}$= [641633425035353011331645]\\ 
  [.5ex] 
      4 & $[66, 11, 40]_{7}$ *$^\circ$ &1&1650&11 &$g_{1}$= [2322660251501] \\
  & & &&& $g_{2}$= [4415215004556]\\ 
  & & &&& $g_{3}$= [1551620013551]\\
  & & &&& $g_{4}$= [4030032120616]\\
  & & &&& $g_{5}$= [4626364150311]\\
  [.5ex] 
      5 & $[77,11,49]_{7}$ * &1&1650&11 &$g_{1}$= [564114442] \\
  & & &&& $g_{2}$= [2461136364]\\ 
  & & &&& $g_{3}$= [1516121646]\\
  & & &&& $g_{4}$= [6616204266]\\
  & & &&& $g_{5}$= [1645256106]\\
  &&&&& $g_{6}$= [261414153]\\
    &&&&& $g_{7}$= [5230445332]\\
  [.5ex] 
        6 & $[84, 11, 54]_{7}$ *$^\circ$ &1&2340&14 &$g_{1}$= [444322262213] \\
  & & &&& $g_{2}$= [554352030341]\\ 
  & & &&& $g_{3}$= [10166310604]\\
  & & &&& $g_{4}$= [15346150532]\\
  & & &&& $g_{5}$= [61316462321]\\
    &&&&& $g_{6}$= [32314565553]\\
    &&&&& $g_{7}$= [15302241541]\\
  [.5ex] 
      7 & $[63,21,27]_{7}$ *&1 &3276&21&$g_{1}$= [124023403431134343212] \\
  & & &&& $g_{2}$= [213132232321244113304]\\ 
  & & &&& $g_{3}$= [43103444244422211142]\\
  [.5ex] 
        8 & $[75,15,42]_{7}$ *&1&4680&15 &$g_{1}$= [2322660251501] \\
  & & &&& $g_{2}$= [4415215004556]\\ 
  & & &&& $g_{3}$= [1551620013551]\\
  & & &&& $g_{4}$= [4030032120616]\\
  & & &&& $g_{5}$= [4626364150311]\\
  [.5ex] 
      9 & $[76,10,50]_{7}$ *$^\circ$&1&5586&19 &$g_{1}$= [5031221102660240321] \\
  & & &&& $g_{2}$= [2363456125525046026]\\ 
  & & &&& $g_{3}$= [261352230616301205]\\
  & & &&& $g_{4}$= [645316064010314463]\\
  [.5ex] 
        10 & $[60,19,27]_{7}$ *$^\circ$&1&7020&20 &$g_{1}$= [15443624352620433401] \\
  & & &&& $g_{2}$= [50055310242611250602]\\ 
  & & &&& $g_{3}$= [2424542451060242561]\\
  [.5ex] 
      11 & $[88,10,59]_{7}$ * &1&9900&10 &$g_{1}$= [5203436246] \\
  & & &&& $g_{2}$= [51043451221]\\ 
  & & &&& $g_{3}$= [12401350525]\\
  & & &&& $g_{4}$= [41243403]\\
  & &&& & $g_{5}$= [53444636644]\\
  &&&&& $g_{6}$= [1031143125]\\
    &&&&& $g_{7}$= [33452534051]\\
  &&&&& $g_{8}$= [22465515122]\\
  [.5ex] 
  12 & $[60,22,24]_{7}$ *$^\circ$ &1 &23400&30 &$g_{1}$= [36635456033343033602466243345] \\
  & & &&& $g_{2}$= [656321204662143163155634466551]\\ 
  [.5ex] 
  13 & $[66,22,28]_{7}$ *$^\circ$&1&23760&22 &$g_{1}$= [6563251503162345354255] \\
  &&& & & $g_{2}$= [533056566022662664544]\\ 
  & &&& & $g_{3}$= [54064355203553654613]\\
  [.5ex] 
  14 & $[50,20,20]_{7}$ * &1&23400&25 &$g_{1}$= [4443213452310165035153651] \\
  & & &&& $g_{2}$= [1630165556635256134420642]\\ 
  [.5ex] 
      15 & $[78, 23, 33]_{5}$ **  &2 & 1638 & 39 &$g_{1}$= [13032111444411404203244031330204130] \\
  & & & & & $g_{2}$= [424210422221110200143114323340030334]\\ 
  [.5ex] 
  
16 & $[84, 13, 47]_{5}$ *$^\circ$  &1 & 2394 & 42 &$g_{1}$= [130321114444114042032440313302041302] \\
  & & & & & $g_{2}$= [424210422221110200143114323340030334]\\ 
  [.5ex] \\
  17  & $[63, 14, 32]_{5}$ *$^\circ$  &1 & 2394 & 21 &$g_{1}$= [321343402000310324] \\
  & & & & & $g_{2}$= [34343420432104113431]\\ 
  & & & & & $g_{3}$= [11021044124132020034]\\
  [.5ex] 
  
   18 & $[70, 13, 38]_{5}$ *$^\circ$  &1  & 2520 & 14 &$g_{1}$= [14442024404443] \\
  & & & & & $g_{2}$= [4003124334114]\\ 
  & & & & & $g_{3}$= [31142014342]\\
  & & & & & $g_{4}$= [4233234211104]\\
  & & & & & $g_{5}$= [4133124411213]\\
  [.5ex] 
\noalign{\smallskip}\hline
\end{tabular}
\end{minipage}}
\end{table}

\begin{table}
\vspace{-.3cm}

\resizebox{0.67\textwidth}{!}{\begin{minipage}{\textwidth}
\begin{tabular}{cccccc}
\hline\noalign{\smallskip}
&$[n,k,d]_{q}$  & $\alpha$ & $N$ & $m$ & Polynomials  \\
\noalign{\smallskip}\hline\noalign{\smallskip}

  19 & $[57, 10, 33]_{5}$ *  &1 & 2394 & 19  &$g_{1}$= [4001122202200132343] \\
  & & & & & $g_{2}$= [2001010134034202224]\\ 
  & & & & & $g_{3}$= [423311200332331241]\\
  [.5ex] 
  20 & $[54, 16, 24]_{5}$ *  &1 & 2520 & 18 & $g_{1}$= [111112401404020422] \\
  & & & & & $g_{2}$= [210103420211004144]\\ 
  & & & & & $g_{3}$= [430443201130124111]\\
    21 & $[57, 18, 24]_{5}$ *  &1 & 2394 & 19 &$g_{1}$= [12023114004324224] \\
  & & & & & $g_{2}$= [31032100304014224]\\ 
  & & & & & $g_{3}$= [140212320140240333]\\
  [.5ex] 
  22 & $[70, 12, 39]_{5}$ *  &1 & 2394 & 14 &$g_{1}$= [22011333323034] \\
  & & & & & $g_{2}$= [2122232302344]\\ 
  & & & & & $g_{3}$= [23043423422123]\\
    & & & & & $g_{4}$= [14114212211]\\
  & & & & & $g_{5}$= [3130431234204]\\
  [.5ex] 
    23 & $[57, 19, 23]_{5}$ *  &1 & 2394 & 19 &$g_{1}$= [21201010102211234] \\
  & & & & & $g_{2}$= [24014322314212313]\\ 
  & & & & & $g_{3}$= [120021024011344434]\\
  [.5ex] 
  24 & $[54, 17, 23]_{5}$ *  &1 & 2394 & 18 &$g_{1}$= [111142143224042044] \\
  & & & & & $g_{2}$= [13203240213132332]\\ 
  & & & & & $g_{3}$= [43013432321002133]\\
  [.5ex] 
  25 & $[54, 15, 25]_{5}$ *  &1 & 2394 & 18 &$g_{1}$= [21201010102211234] \\
  & & & & & $g_{2}$= [24014322314212313]\\ 
  & & & & & $g_{3}$= [120021024011344434]\\
  [.5ex] 
  26 & $[48, 21, 16]_{5}$ * $^{\circ}$  &1 & 840 & 24 &$g_{1}$= [43220423003100340303233] \\
  & & & & & $g_{2}$= [1420414002032004203044]\\ 
  [.5ex] 
  27 & $[63, 21, 25]_{5}$ *$^\circ$  &1 & 1638 & 21 &$g_{1}$= [301304303123033402013] \\
  & & & & & $g_{2}$= [143331010304032042133]\\ 
  & & & & & $g_{3}$= [22102000202342013201]\\
  [.5ex] 
  28 & $[48, 20, 17]_{5}$ *$^\circ$  &1 & 840 & 24 &$g_{1}$= [4112013131321102321333] \\
  & & & & & $g_{2}$= [20221032200410113034424]\\ 
  [.5ex] 
  29 & $[48, 19, 18]_{5}$ *$^{\circ}$  &1 & 840 & 24 &$g_{1}$= [4132313101410322242343] \\
  & & & & & $g_{2}$= [2234233102121102440302]\\ 
  [.5ex] 
  30 & $[98, 14, 56]_{5}$ *  &1 & 840 & 14 &$g_{1}$= [30434340014012] \\
  & & & & & $g_{2}$= [404230140102]\\ 
  & & & & & $g_{3}$= [12113334213201]\\
  & & & & & $g_{4}$= [3042242213423]\\
  & & & & & $g_{5}$= [4144330200204]\\
  & & & & & $g_{6}$= [1400023210111]\\
  & & & & & $g_{7}$= [1140211211241]\\
  [.5ex] \\
  31 & $[54, 18, 22]_{5}$ *  &1 & 2520 & 18 &$g_{1}$= [104230420440014] \\
  & & & & & $g_{2}$= [1002400240424024]\\ 
  & & & & & $g_{3}$= [440320311201032]\\
  [.5ex] 
  32 & $[54, 16, 24]_{5}$ *  &1 & 2394 & 18 &$g_{1}$= [213013442013241243] \\
  & & & & & $g_{2}$= [343444014210104]\\ 
  & & & & & $g_{3}$= [1141401024010033]\\
  [.5ex] 
  33 & $[63, 20, 26]_{5}$ *$^\circ$  &1 & 840 & 21 &$g_{1}$= [433230444340230011144] \\
  & & & & & $g_{2}$= [330024240231414133442]\\ 
  & & & & & $g_{3}$= [30222313400013211232]\\
  [.5ex] 
  34 & $[63, 19, 27]_{5}$ *$^\circ$  &1  & 1638 & 21 &$g_{1}$= [234410410332004300314] \\
  & & & & & $g_{2}$= [14213013022200423341]\\ 
  & & &  & & $g_{3}$= [20003430124131132214]\\
  [.5ex] 
  35 & $[63, 18, 28]_{5}$ *$^\circ$  &1 & 1638 & 21 &$g_{1}$= [212000340143014102304] \\
  & & & & & $g_{2}$= [10040010022420444214]\\ 
  & & & & & $g_{3}$= [11432313222320030332]\\
  [.5ex] 
  36 & $[63, 15, 31]_{5}$ *  &1 & 2310 & 21 &$g_{1}$= [333410132012211231111] \\
  & & & & & $g_{2}$= [1341022300043442024]\\ 
  & & & & & $g_{3}$= [300211014411004001]\\
  [.5ex] 
  37  & $[84, 15, 45] _{5}$ *  &1 & 2310 & 21 &$g_{1}$= [342412033421313430222] \\
  & & & & & $g_{2}$= [100211424324222410432]\\ 
  & & & & &$g_{3}$= [223300414213401242213]\\
    & & & & & $g_{4}$= [11214331334014103314]\\
  [.5ex] 
  38 & $[72, 22, 30]_{5}$ *  &1 & 2520 & 24 &$g_{1}$= [30204234204423333414224] \\
  & & & & & $g_{2}$= [20030300341134112302]\\ 
  & & & & & $g_{3}$= [1203004014301110344232]\\
  [.5ex] 
  39 & $[48, 16, 21]_{5}$ *  &1 & 6552 & 24 &$g_{1}$= [30131044301202334202204] \\
  & & & & & $g_{2}$= [34041432001111423440011]\\ 
  [.5ex] 
  40 & $[52, 14, 25]_{5}$ **  &2 & 9828 & 26 &$g_{1}$= [422412211222002300041024] \\
  & & & & & $g_{2}$= [4112244244422042111422143]\\ 
  [.5ex] \\
  
\noalign{\smallskip}\hline
\end{tabular}
\end{minipage}}
\end{table}

   
\clearpage



%

\end{document}